	\author{
		Dominik Peters
		\\
		CNRS, LAMSADE, Universit\'e Paris Dauphine - PSL
		\\
		dominik.peters@lamsade.dauphine.fr
	}
	\renewcommand\tableofcontents{\listoftoc*{toc}} 
	\author[1]{Dominik Peters}
	\affil[1]{CNRS, LAMSADE, Universit\'e Paris Dauphine - PSL}
\patchcmd\WF@putfigmaybe{\lower\intextsep}{}{}{\fail}%
    \let\Cref\crtCref
    \let\cref\crtcref
\renewcommand{\epsilon}{\varepsilon}
\renewcommand*{\le}{\leqslant}
\renewcommand*{\ge}{\geqslant}
\newtheoremstyle{sfthm}
	{\topsep}
	{\topsep}
	{\itshape}
	{}
	{\sffamily\bfseries}
	{}
	{.5em}
	{}
\theoremstyle{sfthm}
\newtheorem{theorem}{Theorem}[section]
\newtheorem{lemma}[theorem]{Lemma}
\newtheorem{proposition}[theorem]{Proposition}
\newtheoremstyle{sfdef}
{\topsep}
{\topsep}
{}
{}
{\sffamily\bfseries}
{}
{.5em}
{}
\theoremstyle{sfdef}
\newtheorem{definition}[theorem]{Definition}
\newtheorem{example}[theorem]{Example}
\newtheorem{remark}[theorem]{Remark}
\newcommand{\A}{\mathcal{A}}
\newcommand{\PAVscore}{\textup{\textsf{PAV-score}}}
\title{The Core of Approval-Based Committee Elections with Few Seats}
\date{\vspace{-5pt}\sffamily\normalsize Full version of the paper appearing in the proceedings of IJCAI 2025 $\cdot$ May 2025}
\begin{document}

\maketitle

\begin{abstract}
	\iflatexml\else
	\begin{center}
		\textbf{\textsf{Abstract}} \smallskip
	\end{center}\fi
	In an approval-based committee election, the goal is to select a committee consisting of $k$ out of $m$ candidates, based on $n$ voters who each approve an arbitrary number of the candidates. The \emph{core} of such an election consists of all committees that satisfy a certain stability property which implies proportional representation. In particular, committees in the core cannot be ``objected to'' by a coalition of voters who is underrepresented. The notion of the core was proposed in 2016, but it has remained an open problem whether it is always non-empty. We prove that core committees always exist when $k \le 8$, for any number of candidates $m$ and any number of voters $n$, by showing that the Proportional Approval Voting (PAV) rule due to \citet{Thie95a} always satisfies the core when $k \le 7$ and always selects at least one committee in the core when $k = 8$. We also develop an artificial rule based on recursive application of PAV, and use it to show that the core is non-empty whenever there are $m \le 15$ candidates, for any committee size $k \le m$ and any number of voters $n$. These results are obtained with the help of computer search using linear programs.
\end{abstract}

\iflatexml\else
\vspace{15pt}

\hrule

\vspace{5pt}
{
	\setlength\columnsep{35pt}
	\setcounter{tocdepth}{2}
	\renewcommand\contentsname{\vspace{-20pt}}
		{\small
			\tableofcontents}
}

\vspace{11pt}
\hrule

\newpage
\fi

\section{Introduction}\label{sec:introduction}

The seminal work of \citet{ejr-v4,ejr} introduced a rigorous way of reasoning about voter representation in multi-winner elections. Their model considers \emph{approval-based committee elections}, where the task is to identify a committee $W \subseteq C$ of $k$ out of $m$ candidates, based on a set of voters~$N$, with each $i \in N$ indicating a subset $A_i \subseteq C$ of candidates that $i$ approves. \citet{ejr-v4,ejr} formulated a compelling axiom called Extended Justified Representation (EJR) which gives representation guarantees to every group of voters who approve sufficiently many candidates in common. Researchers discovered that voting rules developed 130 years ago by \citet{Thie95a} and \citet{Phra94a} satisfy this or related axioms \citep{brill2023phragmen,ejr,Jans16a}. Interesting \emph{new} rules satisfying EJR have recently been developed \citep{aziz2018complexity,PeSk20,ejr+}, with one of them (the ``Method of Equal Shares'') now in active use for the participatory budgets in several cities in Poland, Switzerland, and the Netherlands.

\citet[Section 5.2]{ejr-v4,ejr} also defined another representation axiom that is significantly stronger than EJR, called \emph{core stability} in analogy to a similar concept from cooperative game theory. A committee $W$ is core stable if for every set $T \subseteq C$, there are not too many voters who prefer the set $T$ to $W$, namely we have
\[
\big|\big\{ i \in N : |A_i \cap T| > |A_i \cap W| \big\}\big| < |T| \cdot \frac{n}{k}.
\]
If this inequality were violated for some $T$, then the set of voters on the left-hand side could form a \emph{blocking coalition} of a size that is large enough for the coalition to ``deserve'' to decide to include $T$ in the committee.

The EJR property is weaker than core stability (because under EJR voters are only allowed to join the blocking coalition if they approve all the candidates in $T$, i.e., $|A_i \cap T| = |T|$), but in exchange there are several attractive voting rules satisfying EJR. On the other hand, \citet{ejr-v4,ejr} noted that ``the core stability condition appears to be too demanding, as none of the voting rules considered in our work is guaranteed to produce a core stable outcome''. No such voting rules have been discovered since. They conclude: ``It remains an open question whether the core [is always] non-empty.'' This question remains open more than 8 years later.

Some amount of progress has been made, and in particular it is known that there always exist committees satisfying approximate variants of the core \citep{PeSk20,jiang2020approximatelystable,munagala2022coremultilinear}, and the core exists on single-peaked approval profiles \citep{pierczynski2022restricteddomains} and on profiles where each candidate has at least $k$ copies \citep{approvalbasedapportionment}.

To the best of my knowledge, the only known existence result that holds \emph{in general} is that the core is non-empty for $k = 3$, which \citet[Section 3.1]{cheng2020group} showed by case analysis. \citet{cheng2020group} conclude that a ``major open question is the existence of deterministic stable committees in the Approval Set setting, generalizing our positive result for $k = 3$ to general $k$. We conjecture that such a stable committee always exists. Via computer-assisted search, we have shown that this conjecture holds for small numbers of voters and candidates ($m + n \le 14$).''

It might seem surprising that the state of the art has not improved beyond these very small parameter values. In particular, there is a natural way of using mixed integer linear programming (ILPs) to search for counterexamples to core existence: fix $m$ and $k$, and introduce a fractional variable for each possible ballot, indicating what fraction of the voters submit this ballot. Then, for every possible committee, enforce using binary variables that there exists at least one successful core deviation. If an ILP solver determines that the resulting program is infeasible, this implies the non-emptiness of the core for $m$ and $k$, for any number $n$ of voters. Unfortunately, the size of this program grows rapidly, and they are not easy to solve even for very small sizes (Gurobi solves $m = 7$, $k = 5$, in 450s, but did not solve $m=7$, $k=4$ after 134\,000s (37h) on 8 cores).%
\footnote{The same problem can also be encoded as an SMT problem on linear arithmetic. This can sometimes lead to faster solve times, but it is also only feasible for very small sizes.}

In this paper, by deriving a new way of using solvers, we show that the core always exists for committee sizes up to $k = 8$, regardless of the number of candidates (improving upon the previous result for $k = 3$). We also show that the core always exists when the number $m$ of candidates is at most $15$, for any $k \le m$. Both results hold \emph{for any number $n$ of voters}.

These results are obtained by analyzing variants of \emph{Proportional Approval Voting} (PAV), the voting rule proposed by \citet{Thie95a}. This voting rule works by maximizing a carefully chosen objective function over the set of all committees of size $k$. We show via linear programs that PAV always selects a core-stable committee when $k \le 7$, and that it always selects at least one core-stable committee when $k = 8$, perhaps tied with other committees that fail core-stability. This performance of the PAV rule is remarkable, and contrasts with other rules such as the Phragm\'en rule and the Method of Equal Shares which fail the core for $k=6$ and $k=7$, respectively (see \Cref{fn:phragmen-mes}). In many applications, the number of seats to fill will be 8 or fewer, and thus our result suggests that PAV is a good rule for such contexts.%
\footnote{Note, however, that even for $k = 6$, PAV may select committees that fail laminar proportionality \citep{PeSk20} and are intuitively unfair. A three-voter example for $k = 6$ is $A_1 = \{c_1, c_2, c_3\}$, $A_2 = \{c_1, c_2, c_4\}$, and $A_3 = \{c_5, c_6, c_7\}$ where $W = \{c_1, c_2, c_3, c_5, c_6, c_7\}$ is a global PAV committee (tied with others) while laminar proportionality demands that $\{c_1, c_2, c_3, c_4\} \subseteq W$.}

For the results about limited numbers of candidates, we consider a recursive version of PAV where, if the PAV committee fails the core because some set $T$ has too much support, we then re-compute PAV subject to the constraint that $T \subseteq W$ and without taking into account the voters who were part of the blocking coalition. If the result still fails core-stability, we add additional constraints. We show that if $m \le 15$, this process always terminates with a core-stable committee. However, there are examples where the rule fails to guarantee the core for $m = 16$.

As mentioned, these results were obtained with the help of linear programming. This becomes feasible even for these quite large sizes because we can fix \emph{one} committee, and add constraints that this committee is the one selected by the voting rule under consideration. This is much simpler than a program that needs constraints for all possible committees. Linear programming has been used before to analyze sequential versions of PAV \citep{skowron2021proportionality,pjr}.
The infeasibility of the relevant programs can be compactly certified via Farkas' lemma, allowing efficient verification of our results without having to trust a solver. Code for these tasks is available at \url{https://github.com/DominikPeters/core-few-candidates/}.

\section{Related Work}

\paragraph{Barriers to core existence.}
Proving that the core is non-empty is difficult because several natural strategies are known not to work.
Importantly, all known voting rules that satisfy weakenings of the core such as EJR fail the core, including the PAV rule (\citealp[Example 6]{ejr}, \citealp[Section 1]{PeSk20}). \citet[Theorem 10]{PeSk20} show that every welfarist rule (one that depends only on voter utilities) must fail the core. They also show that every voting rule satisfying the Pigou--Dalton principle (which says that outcomes that induce a more equitable social welfare distribution should be preferred; this is satisfied by PAV) cannot satisfy the core, and indeed cannot provide better than a 2-approximation to it \citep[Theorem 5]{PeSk20}.

\paragraph{Computational complexity.}
It is NP-hard to compute the PAV rule \citep[Corollary 1]{aziz2015computational}, but it is fixed-parameter tractable for a variety of parameters \citep{yang2023parameterized}. A local search variant of PAV retains its proportionality properties and can be computed in polynomial time \citep{aziz2018complexity} for an appropriately chosen tolerance parameter \citep{kraiczy2024localsearch}. The problem of checking whether a given committee is in the core is coNP-complete \citep[Theorem 5.3]{approvalbasedapportionment} and remains hard even when every voter approves at most 6 candidates, and each candidate is approved by at most 2 voters \citep[Theorem 1]{munagala2022auditing}. The verification problem is also hard to approximate to within a factor better than $1 + 1/e$ \citep[Theorem 2]{munagala2022auditing}, though a logarithmic approximation algorithm exists \citep[Theorem 3]{munagala2022auditing}.

\paragraph{More general settings.}
We work in the model of approval-based committee elections. If non-approval preferences are allowed (such as cardinal additive valuations), the core may be empty (\citealp[Appendix C]{fain2018fair}; \citealp[Example 2]{peters2021proportional}). For participatory budgeting applications, one can replace the cardinality constraint in the definition of a committee by a knapsack constraint. For this non-unit cost setting, approval votes have several interpretations \citep[Section 3.4.2]{rey2023computationalsocialchoiceindivisible}. One option is \emph{cost utilities}, which measures a voter's utility by the total cost of approved winning projects. For this utility model, the core may be empty \citep{maly2023coreapprovalbasedpbinstance}. For \emph{cardinality utilities}, where the voter's utility is the number of approved winning projects, the non-emptiness of the core is an open question.

\paragraph{Approximate core.}
Core stability can be relaxed through multiplicative approximations, in two main ways. Say that a committee $W$ is in the $(\alpha,\beta)$-core, $\alpha, \beta \ge 1$, if for every potential deviation $T$, we have $|\{ i \in N : |A_i \cap T| > \alpha \cdot |A_i \cap W| \}| < \beta \cdot |T| \cdot \frac{n}{k}$. For $ \alpha = \beta = 1$, this is the core; for $\alpha > 1$, every member of the blocking coalition must increase their utility by a factor of $\alpha$; for $\beta > 1$, we require that blocking coalitions must be larger than usual by a factor of $\beta$.

\citet{PeSk20} show that PAV is in the $(2,1)$-core, and that the Method of Equal Shares is in the $(\log k, 1)$-core (or a mild relaxation of that concept).
\citet{munagala2022coremultilinear} show that the $(9.27,1)$-core is non-empty even for general additive valuations. \citet[Appendix C]{fain2018fair} show that the $(1 + \epsilon, 1)$-core is non-empty if it is additionally additively relaxed.
\citet{jiang2020approximatelystable} show that the $(1,16)$-core is non-empty, via rounding a stable lottery \citep{cheng2020group}; they conjecture that at least the $(1,2)$-core is always non-empty. A similar rounding technique has been used to analyze Condorcet winning sets \citep{charikar2024candidatessufficewinvoter}.

\paragraph{Fractional models.}
Analogs of core-stability have also been defined in fractional models. For example, \citet{ABM20a} consider ``fair mixing'' in an approval-based model, where the output is a probability distribution over candidates (which can be interpreted as a division of a budget). They show that the rule maximizing Nash welfare (which is related to the PAV rule) satisfies core-stability. \citet{FGM16a} obtain the same result in a more general model with additive linear utilities. They also show that a core-stable outcome exists for fractional committees (which can be viewed as a probability distribution where each candidate receives mass at most $1/k$) via Lindahl equilibrium which is known to exist from fixed-point theorems \citep{foley1970lindahl,munagala2022coremultilinear} and via convex programming \citep{kroer2025lindahl}. Various weakenings of the core have also been studied in approval-based fair mixing and related models \citep[see, e.g.,][]{brandl2021distribution,suzuki2024maxflow,bei2024truthful}.

\section{Preliminaries}\label{sec:preliminaries}

Let $C = \{c_1, \dots, c_m\}$ be a set of $m$ candidates. Let $k \in \{1, \dots, m\}$ be the committee size. A \emph{committee} is a subset $W \subseteq C$ of winning candidates with $|W| = k$.

Let $\mathcal A$ denote the set of non-empty subsets of $C$, which we will refer to as \emph{approval sets} or \emph{ballots}. A \emph{profile} is a map $P : \mathcal A \to \mathbb Q_{\ge 0}$ with $\sum_{A \in \A} P(A) = 1$, where $P(A)$ indicates the fraction of the voters that approve exactly the candidates in~$A$. Given an approval set $A \in \A$ and a committee $W$, the \emph{utility} of the committee for $A$ is the number of approved committee members: $u_A(W) = |A \cap W|$.

A nonempty set $T \subseteq C$ with $|T| \le k$ is called a \emph{potential deviation}. Given a profile $P$, a committee $W$ is \emph{core stable} if for every potential deviation $T$, we have \[
\sum_{A \in \A : u_A(T) > u_A(W)} P(A) < \frac{|T|}{k}.
\]
Otherwise, $T$ is called a \emph{deviation} (or a \emph{successful deviation}) from $W$. The \emph{core} is the set of committees that are core stable.

The $n$th \emph{harmonic number} is $H(n) = 1 + \frac12 + \frac13 + \cdots + \frac1n$. The \emph{PAV score} assigned to a committee $W$ by ballot $A \in \A$ is
\[
\PAVscore_A(W) = H(u_A(W)).
\]
Given a profile $P$, the \emph{PAV score} of $W$ under profile $P$ is
{\setlength{\belowdisplayskip}{3pt}
\[
\PAVscore_P(W) = \sum_{A \in \A} P(A) \cdot \PAVscore_A(W).
\]
A committee is called a \emph{global PAV committee} if it has the highest PAV score.
The \emph{PAV rule} selects all global PAV committees.}

We will also be interested in local maxima of the PAV objective, i.e., committees that have a weakly higher PAV score than any committee obtained by performing a single swap. 
For a committee~$W$, $x \in W$ and $y \in C \setminus W$, we write $W_{xy} = W \setminus \{x\} \cup \{y\}$ for the committee obtained by replacing $x$ with $y$. Given a profile $P$ and a fixed committee $W$, we write 
\[
\Delta_{P,x,y} = \PAVscore_P(W_{xy}) - \PAVscore_P(W)
\]
for the increase in PAV score resulting from this swap. For a ballot $A$, we define $\Delta_{A,x,y}$ analogously. Then we say that $W$ is a \emph{local PAV committee} if $\Delta_{P,x,y} \le 0$ for all $x$ and $y$. Note that every global PAV committee is also a local PAV committee.

The following lemma shows that a local PAV committee never admits deviations of certain kinds: neither disjoint deviations, nor deviations that contain only a single unelected candidate.

\begin{lemma}
	\label{lem:pav-special-core-deviations}
	Let $W$ be a local PAV committee, and let $T$ be a potential deviation. Suppose we have
	\[
	\text{(i) } T \cap W = \emptyset, 
	\qquad \text{ or } \qquad
	\text{(ii) } |T \setminus W| \le 1.
	\]
	 Then $T$ is \emph{not} a deviation from $W$.
\end{lemma}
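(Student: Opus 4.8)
The plan is to assume, toward a contradiction, that $T$ is a successful deviation from the local PAV committee $W$ and to exhibit a single swap that strictly raises the PAV score, contradicting local optimality. Write $N^\ast = \{A \in \A : u_A(T) > u_A(W)\}$, so that the hypothesis ``$T$ is a deviation'' reads $p := \sum_{A \in N^\ast} P(A) \ge |T|/k$. The first step is to record the marginal effect of a swap. For $x \in W$ and $y \in C \setminus W$, a four-way case analysis on whether a ballot $A$ contains $x$, $y$, both, or neither shows that $\Delta_{A,x,y}$ is $0$ in the ``both'' and ``neither'' cases, equals $1/(u_A(W)+1)$ when $y \in A \not\ni x$, and equals $-1/u_A(W)$ when $x \in A \not\ni y$; summing over $A$ yields
\[
\Delta_{P,x,y} = G_y - L_{x,y},
\]
where the gain term $G_y = \sum_{A : y \in A} P(A)/(u_A(W)+1)$ is the PAV-weight of adding $y$, and the loss term $L_{x,y} = \sum_{A : x, y \in A} P(A)/(u_A(W)+1) + \sum_{A : x \in A,\, y \notin A} P(A)/u_A(W)$ accounts for removing $x$. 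Local optimality of $W$ means $G_y \le L_{x,y}$ for all admissible $x, y$. We will contradict this by picking a suitable $y \in T \setminus W$ and then using $\min_{x \in S} L_{x,y} \le \tfrac{1}{|S|}\sum_{x\in S} L_{x,y}$ over an appropriate subset $S \subseteq W$ to locate an $x$ with $L_{x,y} < G_y$.

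For case (ii): if $T \subseteq W$ then $N^\ast = \emptyset$, contradicting $p \ge |T|/k > 0$, so $T \setminus W = \{y\}$ is a singleton. Put $D = T \cap W$, so $|D| = |T| - 1$, and let $S = W \setminus D$, which is nonempty of size $k - |T| + 1$. The key structural observation is that every $A \in N^\ast$ satisfies $u_A(T) = u_A(W)+1$, which forces $y \in A$ and $A \cap W \subseteq D$; in particular no $A \in N^\ast$ meets $S$, so such ballots contribute $0$ to $L_{x,y}$ whenever $x \in S$. Two estimates then close the case. First, $G_y \ge p/|T|$, since each $A \in N^\ast$ has $u_A(W)+1 = u_A(T) \le |T|$. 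Second, summing $L_{x,y}$ over $x \in S$ and bounding $|A \cap S| \le u_A(W)$ on the ballots outside $N^\ast$ gives $\sum_{x \in S} L_{x,y} \le 1 - p$, hence $\min_{x\in S} L_{x,y} \le (1-p)/(k-|T|+1)$. The elementary fact that $(1-p)/(k-|T|+1) < p/|T|$ whenever $p > |T|/(k+1)$ — which holds because $p \ge |T|/k > |T|/(k+1)$ — produces an $x \in S$ with $L_{x,y} < G_y$, the contradiction.

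Case (i), $T \cap W = \emptyset$, uses the same template but needs a more careful choice of the swapped-in candidate, and this is where the real difficulty lies. A naive attempt — averaging $\Delta_{P,x,y}$ over all pairs $(x,y) \in W \times T$ — fails, because a ballot approving all of $W$ and none of $T$ contributes roughly $-|T|$ to that average, which the blocking ballots cannot outweigh when $|T|$ is small (say $|T| = 2$). Instead, choose $y^\ast \in T$ maximizing $G_{y^\ast}$; since $\sum_{y \in T} G_y = \sum_A P(A)\, u_A(T)/(u_A(W)+1) \ge \sum_{A \in N^\ast} P(A) = p$ (using $u_A(T) \ge u_A(W)+1$ on $N^\ast$), we obtain $G_{y^\ast} \ge p/|T| \ge 1/k$. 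Now take $S = W$ and regroup the loss ballot by ballot: $\sum_{x \in W} L_{x,y^\ast} = \sum_{A : y^\ast \in A} P(A)\,u_A(W)/(u_A(W)+1) + \sum_{A : y^\ast \notin A,\ u_A(W) \ge 1} P(A) < 1$, where the strict inequality hinges on each harmonic increment $u_A(W)/(u_A(W)+1)$ being \emph{strictly} less than $1$ (and on $\sum_{A : y^\ast \in A} P(A) > 0$, which follows from $G_{y^\ast} \ge 1/k$). Therefore $\min_{x \in W} L_{x,y^\ast} \le \tfrac1k \sum_{x \in W} L_{x,y^\ast} < \tfrac1k \le G_{y^\ast}$, contradicting $G_{y^\ast} \le L_{x,y^\ast}$. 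Apart from this decoupling of the choices of $y$ and $x$, the remaining work is routine: verifying the piecewise formula for $\Delta_{A,x,y}$ and handling the degenerate cases $u_A(W) = 0$ and $T \subseteq W$.
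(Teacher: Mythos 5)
Your proof is correct; I checked the swap decomposition $\Delta_{P,x,y}=G_y-L_{x,y}$, the structural claim in case (ii) that every blocking ballot contains the unique $y\in T\setminus W$ and avoids $S=W\setminus T$, the two averaging bounds $G_y\ge p/|T|$ and $\sum_{x\in S}L_{x,y}\le 1-p$ (with the arithmetic $(1-p)/(k-|T|+1)<p/|T|$ following from $p\ge |T|/k$), and in case (i) the choice of the best $y^\ast\in T$ together with $\sum_{x\in W}L_{x,y^\ast}<1$; all steps go through, including the degenerate cases. The route differs from the paper's in that the paper does not argue from first principles at all: it disposes of case (i) by citing the swapping argument of Brill et al.\ (Theorem 3.2 and Remark 3.4 of the approval-based apportionment paper), and of case (ii) by observing that a deviation with $|T\setminus W|=\{c\}$ yields an $\ell/k$-sized group of voters all approving the unelected $c$ with utility below $\ell=|T|$, i.e.\ an EJR+ violation, which local PAV committees are known to exclude. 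Your argument is essentially a self-contained reconstruction of what those citations encapsulate: your case (ii) is in effect a direct proof of the relevant instance of EJR+ for local PAV, and your case (i), with the decoupled choice of $y^\ast$ maximizing the marginal gain (rather than averaging over all pairs, which you correctly note fails for small $|T|$ and large $k$), mirrors the cited swapping argument. The paper's version buys brevity and situates the lemma within known axioms; yours buys independence from external results and makes explicit the uniform gain/loss bookkeeping that also underlies the paper's Theorem on $k\le 7$, so it is a legitimate and arguably more instructive alternative, just not the proof the paper gives.
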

\begin{proof}
	(i) is proved by \citet[Theorem 3.2 and Remark 3.4]{approvalbasedapportionment} using a swapping argument. 
	
	(ii) If $T \setminus W = \emptyset$, then $T \subseteq W$, and there are no approval ballots that strictly prefer $T$ to $W$. If $T \setminus W = \{c\}$, then for every $A \in \A$ with $u_A(T) > u_A(W)$, we have $c \in A$. Writing $\ell = |T|$, if $T$ were a deviation, we would thus have a set $S$ of ballots forming $\ell/k$ of the profile, who all approve $c \not\in W$, and who all have utility $u_A(W) < u_A(T) \le \ell$. Thus, we have a violation of the EJR+ axiom of \citet{ejr+} which local PAV committees are known to satisfy.
\end{proof}

Note that \Cref{lem:pav-special-core-deviations}(ii) implies that PAV satisfies the core when $k = m - 1$ or $k = m$.

We recall the following well-known result about systems of linear inequalities, providing a certificate of infeasibility.

\begin{lemma}[Farkas' Lemma]
	Let $A \in \mathbb Q^{mn}$ be an $m \times n$ matrix, and let $b \in \mathbb Q^m$. Then the following are equivalent.
	\begin{enumerate}
		\item[(i)] There does not exist $x \in \mathbb Q^n$ with $Ax \le b$. 
		\item[(ii)] There exists an integer vector $y \in \mathbb Z_{\ge 0}^m$ such that $y^T b < 0$ and $A^T y \ge 0$.
	\end{enumerate}
\end{lemma}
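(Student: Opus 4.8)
Farkas' Lemma is a classical result, so the plan is to recall a short self-contained proof and isolate the only point specific to this formulation, namely the integrality of the certificate.

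The direction (ii) $\Rightarrow$ (i) is a one-line verification: multiplying the system $Ax \le b$ on the left by a nonnegative row vector $y^T$ preserves the inequalities and their sum, giving $y^T A x \le y^T b$, i.e. $(A^T y)^T x \le y^T b < 0$; since $A^T y \ge 0$ (and the primal variables are nonnegative, or $A^T y = 0$ in the free-variable version) the left-hand side is $\ge 0$, a contradiction, so no feasible $x$ exists. Note this direction uses only $y \ge 0$, not integrality.

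For the substantive direction (i) $\Rightarrow$ (ii) I would use Fourier--Motzkin elimination on the system $Ax \le b$. Eliminate the variables $x_1, \dots, x_n$ one at a time: to remove $x_j$, replace each pair consisting of an inequality with positive $x_j$-coefficient and one with negative $x_j$-coefficient by the unique positive combination of the two in which $x_j$ cancels, and keep the inequalities not involving $x_j$. The invariant, maintained by induction on the number of variables removed, is that every inequality currently present has the form $(A^T z)^T x \le b^T z$ for some rational $z \ge 0$ read off from the combining coefficients, and that the current system is feasible if and only if the original one is. After all $n$ variables have been eliminated we are left with inequalities of the form $0 \le b^T z$; by hypothesis (i) the original system is infeasible, so at least one such inequality has $b^T z < 0$, and for that particular $z$ all $x$-coefficients have vanished, i.e. $A^T z = 0$ (in particular $A^T z \ge 0$), with $z \ge 0$ rational.

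Finally, to obtain an \emph{integer} certificate, multiply this $z$ by the least common multiple of the denominators of its entries, a positive integer; scaling by a positive constant preserves $z \ge 0$, $A^T z \ge 0$, and $b^T z < 0$, yielding the desired $y \in \mathbb Z_{\ge 0}^m$. The only genuine work lies in the bookkeeping of the elimination step --- verifying that every generated inequality remains a nonnegative combination of the original rows, and that projecting out a variable preserves feasibility --- which is routine but fiddly, and is the step I would expect to be the main obstacle to writing out carefully. A slicker alternative is to separate $b$ from the polyhedral cone $\{\, b' : Ax \le b' \text{ for some admissible } x \,\}$ by a hyperplane, but then the delicate point becomes proving that this finitely generated cone is closed.
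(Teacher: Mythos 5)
The paper itself gives no proof of this lemma---it is recalled as a classical result, used only so that exhibited vectors $y$ can certify infeasibility---so there is nothing to match your argument against; a self-contained Fourier--Motzkin proof is a perfectly standard route, and your treatment of the substantive direction (i) $\Rightarrow$ (ii), including the scaling by a common denominator to get an integer certificate, is correct as sketched.

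The genuine issue is in the ``easy'' direction, which is in fact the direction the paper relies on. Your verification of (ii) $\Rightarrow$ (i) quietly inserts the hypothesis that either the primal variables are nonnegative or $A^T y = 0$, and neither hypothesis appears in the statement you were asked to prove: there $x$ ranges over all of $\mathbb{Q}^n$ and the certificate is only required to satisfy $A^T y \ge 0$. With those literal hypotheses the implication is false---take $m=n=1$, $A=(1)$, $b=(-1)$, $y=1$: then $y\ge 0$, $A^Ty=1\ge 0$, $y^Tb=-1<0$, yet $x=-1$ satisfies $Ax\le b$---so no proof can close this gap; the statement needs repair rather than your argument. The repair is exactly what your parenthetical hints at: either require $x\ge 0$ in (i) (this is the variant the paper actually uses, since in system \eqref{program:PAV counterexample} the variables are ballot weights with $P(A)\ge 0$ and the nonnegativity constraints are kept separate from the rows dualized by $y$), or strengthen (ii) to $A^Ty=0$ (this is the variant your Fourier--Motzkin elimination actually delivers, since the surviving inequalities have all variable coefficients cancelled). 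You should state which variant you are proving and prove that one cleanly, rather than leaving the two options as an aside inside the verification step; as written, the proof of (ii) $\Rightarrow$ (i) does not establish the claim as stated, and the mismatch between the two halves of your write-up (elimination gives $A^Tz=0$, verification wants $x\ge 0$) would surface as soon as the details were filled in.
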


\noindent
Thus, by exhibiting the integer vector $y$, one can prove the infeasibility of the system $Ax \le b$.

\section{Small Committee Size}\label{sec:small-k}

In this section, we discuss core-stable committees when the committee size $k$ is small. We give existence results when $k \le 8$, separately handling the cases $k \le 7$ and $k = 8$.

\subsection{Committee size $k \le 7$}

For committee sizes up to $7$, core-stable committees always exist because every local PAV committee is in the core. This establishes existence of core-stable committees, but also indicates that PAV rule is a very good rule for smaller committee sizes.%
\footnote{\label{fn:phragmen-mes}There are examples where the sequential Phragm\'en rule fails core (and even EJR) for $k = 6$, and where MES fails core for $k = 7$. These counterexamples work even for the party-approval setting \citep{approvalbasedapportionment}, where each candidate can be placed in the committee several times. For Phragm\'en, take the 3-voter profile $(ab, bc, ac)$, where Phragm\'en can elect $ababab$, with $T = \{c,c,c,c\}$ forming a deviation. For MES, take the 7-voter profile $(ab,ac,ad,bcd,bcd,bcd,bcd)$, where $bbbbbaa$ is an outcome of MES, with $T = \{c,c,c,d,d,d\}$ forming a deviation. (In these examples, there are other tied outcomes in the core, and I don't know if unique examples exist.)}
Like other proofs that PAV is proportional, our proof reasons about the change in PAV score caused by certain swaps. To establish a key inequality, the proof refers to the result of a computer enumeration of all possible ballot types, which can be reproduced using \href{https://github.com/DominikPeters/core-few-candidates/blob/master/A_committee_size_7/check-inequality.ipynb}{Python code available on GitHub}.

\newcommand{\coalitionsize}{s}
\begin{theorem}
	\label{thm:pav-k7}
	\!\!When $k \le 7$, every local PAV committee is in the core.
\end{theorem}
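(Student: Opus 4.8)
The plan is to assume for contradiction that a local PAV committee $W$ admits a successful deviation $T$, and to reach a contradiction by combining the local-optimality inequalities $\Delta_{P,x,y}\le 0$ over a suitable family of swaps. First I would invoke \Cref{lem:pav-special-core-deviations}, which rules out deviations that are disjoint from $W$ or have $|T\setminus W|\le 1$; so I may assume $|T\cap W|\ge 1$ and $|T\setminus W|\ge 2$. Writing $\ell=|T|\le k$, $t=|T\setminus W|$, and $r=|T\cap W|=\ell-t$, this means $r\ge 1$ and $2\le t\le\ell-1$, hence $3\le\ell\le k\le 7$; in particular $|W\setminus T|=k-r\ge t$, and there are only finitely many ``shapes'' $(k,\ell,t,r)$. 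I would then classify each ballot $A$ by the triple $b=|A\cap(T\setminus W)|$, $c=|A\cap(W\setminus T)|$, $d=|A\cap T\cap W|$ (candidates of $A$ outside $W\cup T$ play no role in the swaps below). Then $u_A(W)=c+d$ and $u_A(T)=b+d$, so, setting $S=\{A\in\A:u_A(T)>u_A(W)\}$, a ballot lies in $S$ exactly when $b>c$, and since $T$ is a successful deviation we have $\sum_{A\in S}P(A)\ge\ell/k$. As $0\le b\le t$, $0\le c\le k-r$, and $0\le d\le r$, only finitely many ballot types occur.

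Next I would sum the (legal) swaps that replace some $x\in W\setminus T$ by some $y\in T\setminus W$. For a single ballot, $\Delta_{A,x,y}$ equals $\tfrac1{u_A(W)+1}$ if $y\in A\not\ni x$, equals $-\tfrac1{u_A(W)}$ if $x\in A\not\ni y$, and is $0$ otherwise; summing over all $(k-r)t$ such pairs and using $\Delta_{P,x,y}=\sum_A P(A)\,\Delta_{A,x,y}$ gives
\[
\sum_{A\in\A}P(A)\cdot g_{k,\ell,t,r}(b,c,d)\ \le\ 0,
\qquad
g_{k,\ell,t,r}(b,c,d)=\frac{b\,(k-r-c)}{c+d+1}-\frac{(t-b)\,c}{c+d},
\]
with the convention that the second term vanishes when $c=0$.

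It then remains to show that this inequality is incompatible with $\sum_{A\in S}P(A)\ge\ell/k$. One checks that $g_{k,\ell,t,r}(b,c,d)>0$ whenever $b>c$ (for instance, when $d=0$ and $b=c+1$ it is at least $k-\ell+1\ge 1$). Then a finite enumeration over all ballot types $(b,c,d)$ and all shapes $(k,\ell,t,r)$ with $k\le 7$ — the computer search referenced in the theorem statement — verifies
\[
\frac{\ell}{k}\cdot\min_{b>c}g_{k,\ell,t,r}(b,c,d)\ +\ \Bigl(1-\tfrac{\ell}{k}\Bigr)\cdot\min_{b\le c}g_{k,\ell,t,r}(b,c,d)\ >\ 0 .
\]
Since every profile with $\sum_{A\in S}P(A)\ge\ell/k$ places weight at least $\ell/k$ on ballots with $b>c$ and the rest on arbitrary ballots, the left-hand side above is a lower bound for $\sum_{A}P(A)\,g_{k,\ell,t,r}(b,c,d)$, which is therefore strictly positive — contradicting the displayed inequality. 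Hence $W$ admits no successful deviation.

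The delicate point, which I expect to be the main obstacle, is the last inequality. When $\ell<k$ the coefficient $g_{k,\ell,t,r}$ is \emph{not} sign-definite — it is negative on some ballots with $b\le c$ — so one cannot argue termwise that $\sum_A P(A)\,g\ge 0$; the global constraint $\sum_{A\in S}P(A)\ge\ell/k$ must be exploited, via the weighted minimum above. This is exactly where $k\le 7$ enters: the analogous weighted minimum becomes non-positive at $k=8$, which is why $k=8$ needs the separate and weaker treatment. Replacing the uniform sum of swaps by a non-uniform weighting does not obviously help, since the candidates of $W\setminus T$ (and those of $T\setminus W$) play symmetric roles, so uniform weights are essentially forced.
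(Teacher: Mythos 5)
Your proposal is correct and follows essentially the same route as the paper: you sum the local-optimality inequalities over all swaps $x\in W\setminus T$, $y\in T\setminus W$, reduce each ballot's contribution to a finite-type coefficient (your $g_{k,\ell,t,r}(b,c,d)$ is exactly the paper's $\delta_A$ up to renaming), and delegate the key inequality to a finite computer enumeration over ballot types and shapes. Your weighted-minimum check at weight $\ell/k$ carries the same content as the paper's verification, which establishes the per-ballot bound $\delta_A > (\tfrac{k}{|T|}-1)\,|T\setminus W|$ for deviating ballots and combines it with the trivial bound $\delta_A \ge -|T\setminus W|$ for the rest.
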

\begin{proof}
	Let $W$ be a PAV committee with $|W| = k \le 7$, and consider a potential deviation $T$. We need to show that $T$ is not successful. Writing
	\iflatexml
		\[
		\coalitionsize = \sum_{A \in \A : u_A(T) > u_A(W)} P(A),
		\]
	\else
		\[
			\coalitionsize = \quad\sum_{\mathclap{\substack{A \in \A \\ u_A(T) > u_A(W)}}}\quad P(A),
		\]
	\fi
	we need to show that $\coalitionsize < |T|/k$. We may assume that $\coalitionsize > 0$, since otherwise $T$ is definitely not successful.
	
	We now consider the following process: we remove some candidate $x \in W \setminus T$ from the committee $W$, and replace it by some candidate $y \in T \setminus W$, and will estimate the total effect of these swaps. Because $W$ is a local PAV committee, we have
	\begin{equation}
		\label{eq:summed differences in PAV score}
		\sum_{x \in W \setminus T} \sum_{y \in T \setminus W} \Delta_{P,x,y} \le 0
	\end{equation}
	We can rewrite \eqref{eq:summed differences in PAV score} by computing the contributions of each ballot to the differences in PAV score. Let $A \in \A$. Note that swapping in $y$ for $x$ either increases the utility $u_A(W)$ by 1 point (if $x \not\in A$ but $y \in A$), decreases it by 1 point (if $x \in A$ but $y \not \in A$), or otherwise it stays the same. The number of $(x,y)$ pairs leading to an increase is $|(W \setminus T) \setminus A| \cdot |(T \setminus W) \cap A|$; the number of pairs leading to a decrease is $|(W \setminus T) \cap A| \cdot |(T \setminus W) \setminus A|$. 
	Thus, from the point of view of the ballot $A$, the total difference in PAV score summed across all $x$ and $y$ becomes
	\begin{align*}
		\delta_A :\!&= \sum_{x \in W \setminus T} \sum_{y \in T \setminus W} \Delta_{A,x,y} \\
		&= \frac{|(W \setminus T) \setminus A| \cdot |(T \setminus W) \cap A|}{u_A(W)+ 1}
		- \frac{|(W \setminus T) \cap A| \cdot |(T \setminus W) \setminus A|}{u_A(W)}.
	\end{align*}
	Thus we can rewrite \eqref{eq:summed differences in PAV score} as follows:
	\begin{equation}
		\label{eq:combinatorial differences sum}				
		\sum_{A \in \A} P(A) \cdot \delta_A \le 0
	\end{equation}
	We will now give lower bounds on the value of $\delta_A$.
	
	First we give a trivial lower bound, valid for all $A \in \A$:
	\begin{align*}
		\delta_A 
		&\ge - \frac{|(W \setminus T) \cap A| \cdot |(T \setminus W) \setminus A|}{u_A(W)}
		\tag{dropping positive terms}\\
		&\ge - \frac{|W \cap A| \cdot |T \setminus W|}{u_A(W)} \\
		&= - |T \setminus W|. \tag{since $u_A(W) = |W\cap A|$}
	\end{align*}
	Next, suppose $A \in \A$ is a ballot with $u_A(T) > u_A(W)$, i.e., $|A \cap T| \ge |A \cap W| + 1$. 
	For such ballots, there exists a better lower bound for $\delta_A$. In particular, it is the case that
	\begin{equation}
		\label{eq:ballot condition}
		\delta_A > (\tfrac{k}{|T|} - 1) \cdot |T \setminus W|.
	\end{equation}
	
	This can be shown by a \href{https://github.com/DominikPeters/core-few-candidates/blob/master/A_committee_size_7/check-inequality.ipynb}{small exhaustive search} over all possible combinations of the quantities appearing in the definition of $\delta_A$. In particular, write
	\[
	a = |(W \setminus T) \cap A|, \quad
	b = |(W \cap T) \cap A|, \quad
	c = |(T \setminus W) \cap A|.
	\]
	Then we have $\delta_A = \frac{(|W \setminus T| - a) c}{a + b + 1} - \frac{a (|T\setminus W| - c)}{a + b}$, and one can check that this is always strictly larger than $(\tfrac{k}{|T|} - 1) \cdot |T \setminus W|$ by trying all triples $(a, b, c)$ with $0 \le a \le |W \setminus T|$, $0 \le b \le |W \cap T|$, and $0 \le c \le |T \setminus W|$ which satisfy $b + c > a + b$ (which encodes that $u_A(T) > u_A(W)$).%
	\footnote{Note that none of $a$, $b$, $c$ depend on $m$, so the search is finite and the proof works independently of the number of candidates.}
	
	Based on these bounds, we have
	\begin{align*}
		0 
		&\ge \textstyle\sum_{A \in \A} P(A) \cdot \delta_A 
		\tag{by \eqref{eq:combinatorial differences sum}} \\
		&> \coalitionsize \big( (\tfrac{k}{|T|} - 1) \cdot |T \setminus W|\big) + (1 - \coalitionsize) (- |T \setminus W|) \tag{$\coalitionsize > 0$} \\
		&= |T \setminus W| \cdot (\coalitionsize (\tfrac{k}{|T|} - 1) - (1 - \coalitionsize)) \\
		&= |T \setminus W| \cdot (\coalitionsize \tfrac{k}{|T|} - 1).
	\end{align*}
	Thus, because $|T \setminus W| > 0$, we have $0 > \coalitionsize \smash{\frac{k}{|T|}} - 1$ and hence $\coalitionsize < \frac{|T|}{k}$, as desired.
\end{proof}

While it may not look like it, the result of \Cref{thm:pav-k7} was obtained by linear programming. Suppose that the result is false, so that there exists a profile $P$ and a local PAV committee $W$ such that some $T$ is a successful deviation. Without loss of generality, there exists such an example with $W = \{c_1, \dots, c_k\}$. Note that $P$ then forms a solution to the following system of linear inequalities, where we may assume that $C = W \cup T$:%
\footnote{We may make this assumption because a counterexample on a larger $C$ remains a counterexample when restricted to $W \cup T$, because a local PAV committee remains a local PAV committee after deleting candidates outside the committee.}
\begin{equation}
	\label{program:PAV counterexample}
	\begin{aligned}
		\sum_{A \in \A} P(A) &= 1 \\
		\Delta_{P,x,y} &\le 0 && \text{for all $x \in W$ and $y \in C \setminus W$} \\
		\sum_{\mathclap{A \in \A : u_A(T) > u_A(W)}} \: P(A) &\ge \frac{|T|}{k} \\
		P(A) &\ge 0 && \text{for all } A \in \A
	\end{aligned}
\end{equation}
Thus, \Cref{thm:pav-k7} is proven if the system \eqref{program:PAV counterexample} does \emph{not} have a feasible solution, for all potential deviations $T$. 
By Farkas' lemma, that means that for every $T$, there exist
$\alpha \in \mathbb{R}$, $(\beta_{xy})_{xy} \ge 0$, $\gamma \ge 0$ satisfying
\begin{align*}
	\alpha - \tfrac{|T|}{k}\gamma < 0 \\
	\alpha + \sum_{x\in W}\sum_{y\not\in W} \Delta_{A,x,y} \cdot \beta_{xy} - \gamma \ge 0 && \text{for all } A \in \A \text{ with } u_A(T) > u_A(W) \\
	\alpha + \sum_{x\in W}\sum_{y\not\in W} \Delta_{A,x,y} \cdot \beta_{xy} \ge 0 && \text{for all } A \in \A \text{ with } u_A(T) \le  u_A(W)
\end{align*}
In fact, the proof of \Cref{thm:pav-k7} simply constructs such a certificate solution for every $T$, where $\alpha = |T \setminus W|$ and $\beta_{xy} = 1$ whenever $x \in W \setminus T$ and $y \in T \setminus W$, and $\beta_{xy} = 0$ otherwise. \\

Due to the following remark, it is possible to compute a core-stable outcome in $O(m^2n)$ time whenever $k \le 7$.

\begin{remark}
	Let $k \le 7$ and $\epsilon = 0.1/k^2$. By solving linear programs \href{https://github.com/DominikPeters/core-few-candidates/blob/master/A_committee_size_7/pav-almost-swap-stable.py}{[GitHub]}, one can check that every committee that is $\epsilon$-local-swap-stable (i.e., $\Delta_{P,x,y} \le \epsilon$ for all $x$ and $y$) is in the core. An $\epsilon$-local-swap-stable and thus a core-stable committee can be found by performing $O(k^2 \ln k) = O(1)$ many $\epsilon$-improving swaps \citep[see][Proposition~1]{aziz2018complexity}, each of which takes $O(m^2n)$ time to find. It turns out that for $\epsilon = 1/k^2$, it is not the case that every $\epsilon$-local-swap-stable committee is in the core, even though this value of $\epsilon$ is enough to ensure that the committee satisfies EJR \citep[Theorem 1]{aziz2018complexity}. For example, in the profile with $P(\{a,b\}) = P(\{a,c\}) = 0.25$ and $P(\{d,e,f,g,h\}) = 0.5$, for $k = 6$, the committee $\{a,d,e,f,g,h\}$ fails the core due to $T = \{a,b,c\}$, but it is $1/40$-local-swap-stable, and $1/40 < 1/36 = 1/k^2$.
\end{remark}

\subsection{Committee size $k = 8$}

\Cref{thm:pav-k7} does not hold for $k = 8$: There are profiles where some local (and even global) PAV committee is not in the core.

\begin{example}[PAV may fail core for $k = 8$]
	\label{ex:pav-k8}
	Consider an instance with 4 voters, $v_1$ approving $\{c_1, c_2, c_3\}$, and $v_2$ approving $\{c_1, c_2, c_4\}$, and the other 2 voters approving $\{c_5, c_6, c_7, c_8, c_9, c_{10}\}$. This profile is depicted below, where each voter approves the candidates above the voter's label.
	\[
	\begin{tikzpicture}
		[yscale=0.4,xscale=0.78,voter/.style={anchor=south, yshift=-7pt}, select/.style={fill=blue!10}, c/.style={anchor=south, yshift=1.5pt, inner sep=0}]
		\draw[select] (0,0) rectangle (2,1);
		\draw[select] (0,1) rectangle (2,2);
		\draw (0,2) rectangle (1,3);
		\draw (1,2) rectangle (2,3);
		
		\draw[select] (2,0) rectangle (4,1);
		\draw[select] (2,1) rectangle (4,2);
		\draw[select] (2,2) rectangle (4,3);
		\draw[select] (2,3) rectangle (4,4);
		\draw[select] (2,4) rectangle (4,5);
		\draw[select] (2,5) rectangle (4,6);
		
		\node at (1,0.42) {$c_1$};
		\node at (1,1.42) {$c_2$};
		\node at (0.5,2.42) {$c_3$};
		\node at (1.5,2.42) {$c_4$};
		\node at (3,0.42) {$c_5$};
		\node at (3,1.42) {$c_6$};
		\node at (3,2.42) {$c_7$};
		\node at (3,3.42) {$c_8$};
		\node at (3,4.42) {$c_9$};
		\node at (3,5.42) {$c_{10}$};
		
		\foreach \i in {1,...,4}
		\node[voter] at (\i-0.5,-1) {$v_\i$};
	\end{tikzpicture}
	\]
	On this profile, $W = \{c_1, c_2, c_5, c_6, c_7, c_8, c_9, c_{10}\}$ is a global PAV committee (indicated in blue in the picture). However, $W$ is not in the core: consider $T = \{c_1, c_2, c_3, c_4\}$, which has support from $\frac12$ of the voters, and $|T|/k = \frac12$.
	\qed
\end{example}

Note, however, that in \Cref{ex:pav-k8}, there is more than one global PAV committee. In particular, $W' = \{c_1, c_2, c_3, c_5, c_6, c_7, c_8, c_9\}$ (obtained by removing $c_{10}$ and adding $c_3$) is also a global PAV committee and it is in the core.

It turns out that \Cref{ex:pav-k8} is essentially the only example where a PAV committee fails to be core-stable for $k = 8$, as all such examples share the same structure.

\begin{lemma}
	\label{lem:pav-k8-counterexamples}
	Let $P$ be a profile and suppose that $W$ with $|W| = 8$ is a local PAV committee that is not in the core due to objection $T$. Then there exist distinct $a,b \in W$ and distinct $x, y \in C \setminus W$ such that $T = \{a,b,x,y\}$. In addition,
	\begin{itemize}
		\item[(i)] one quarter of the voters submit ballots $A$ such that $A \cap (W \cup T) = \{a, b, x\}$ and another quarter submit ballots with $A \cap (W \cup T) = \{a, b, y\}$,
		\item[(ii)] the remaining half of the voters submit ballots that are disjoint from $T$, and
		\item[(iii)] the PAV score of $W$ is reduced by exactly $1/12$ if any one member of $W \setminus \{a,b\}$ is removed.
	\end{itemize}
\end{lemma}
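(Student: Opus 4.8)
The plan is to re-run the swapping argument from the proof of \Cref{thm:pav-k7}, but this time keeping track of exactly when the estimates are tight. As there, set $s = \sum_{A : u_A(T) > u_A(W)} P(A)$, so that $T$ being a successful deviation means $s \ge |T|/8$, and recall that local PAV-optimality gives $\sum_{A}P(A)\,\delta_A \le 0$, where
\[
\delta_A \;=\; \frac{(|W\setminus T| - a_A)\,c_A}{a_A+b_A+1} \;-\; \frac{a_A\,(|T\setminus W| - c_A)}{a_A+b_A}
\]
depends on the ballot only through $a_A = |(W\setminus T)\cap A|$, $b_A = |(W\cap T)\cap A|$, $c_A = |(T\setminus W)\cap A|$ (hence only through $A \cap (W\cup T)$), and note that $u_A(T) > u_A(W)$ is equivalent to $c_A > a_A$. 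By \Cref{lem:pav-special-core-deviations} we may assume $|T\cap W| \ge 1$ and $|T\setminus W| \ge 2$.

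The first step is to pin down the combinatorial type of $T$. For a fixed admissible pair $(|T\cap W|, |T\setminus W|)$ --- i.e.\ with $|T\cap W| \ge 1$, $|T\setminus W| \ge 2$, and $|T| \le 8$ --- let $L$ be the minimum of $\delta_A$ over all \emph{blocking} triples (those with $c_A > a_A$); this is a finite computation independent of $m$. Combining the trivial bound $\delta_A \ge -|T\setminus W|$ (valid for every ballot) with $\delta_A \ge L$ on blocking ballots gives
\[
0 \;\ge\; \sum_A P(A)\,\delta_A \;\ge\; s\,L - (1-s)\,|T\setminus W|,
\]
and whenever $L > \bigl(\tfrac{8}{|T|} - 1\bigr)|T\setminus W|$ the right-hand side is strictly positive (as $s \ge |T|/8 > 0$), a contradiction. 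The finite check should confirm that this strict inequality holds for every admissible pair except $(2,2)$, where $L = 2 = \bigl(\tfrac{8}{4} - 1\bigr)\cdot 2$, attained \emph{only} at $(a_A, b_A, c_A) = (0, 2, 1)$. So $|T| = 4$ with $|T\cap W| = 2$ and $|T\setminus W| = 2$; writing $W \cap T = \{a,b\}$ and $T\setminus W = \{x,y\}$ yields $T = \{a,b,x,y\}$ with $a,b$ distinct in $W$ and $x,y$ distinct in $C\setminus W$.

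The second step exploits that, in this surviving case, every inequality in the last display is tight: since $s \ge \tfrac12$ and $L = 2$, we are forced to have $s = \tfrac12$, every blocking ballot of positive mass to satisfy $\delta_A = 2$, and every non-blocking ballot of positive mass to satisfy $\delta_A = -2$. A short inspection of $\delta_A$ shows that among blocking triples $\delta_A = 2$ forces $(a_A, b_A, c_A) = (0,2,1)$, i.e.\ $A \cap (W\cup T) \in \{\{a,b,x\},\{a,b,y\}\}$, and that $\delta_A = -2$ forces $b_A = c_A = 0$ with $a_A \ge 1$, i.e.\ $A$ is disjoint from $T$ (in fact it also meets $W\setminus T$). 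This gives (ii) and the qualitative half of (i). Moreover $\sum_{x\in W\setminus T}\sum_{y\in T\setminus W}\Delta_{P,x,y} = \sum_A P(A)\,\delta_A = 0$, and since each summand $\Delta_{P,x,y} \le 0$, every one of these swaps is score-neutral; in particular $\Delta_{P,z,x} = \Delta_{P,z,y} = 0$ for all $z \in W\setminus T$.

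The third step evaluates these score-neutral swaps. Let $p$ and $q$ be the total masses of the two blocking ballot types $\{a,b,x\}$ and $\{a,b,y\}$, so $p + q = s = \tfrac12$, and let $R_z = \sum_{A \ni z} P(A)/u_A(W)$ be the drop in PAV score caused by deleting $z \in W\setminus T$; only ballots disjoint from $T$ contain such a $z$, and for them $u_A(W) = a_A$. Swapping $z$ out for $x$ raises by one the utility of exactly the $\{a,b,x\}$-ballots (from $2$ to $3$, a gain of $\tfrac13$ each) and lowers by one the utility of exactly the ballots containing $z$; hence $0 = \Delta_{P,z,x} = \tfrac{p}{3} - R_z$, and symmetrically $0 = \tfrac{q}{3} - R_z$. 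Therefore $p = q = \tfrac14$ and $R_z = \tfrac{1}{12}$ for every $z \in W \setminus \{a,b\}$, which are exactly (i) and (iii). The main obstacle, as in \Cref{thm:pav-k7}, is the finite case analysis of $\delta_A$: one must verify that $(2,2)$ is genuinely the \emph{only} borderline type (a few pairs, e.g.\ with $|T| = 5$, come close), and that the minimisers of $\delta_A$ on each of the blocking and non-blocking sides are \emph{unique}, paying attention to degenerate triples with $a_A + b_A = 0$ --- the entire structural conclusion rests on this uniqueness. Alternatively, one could extract the same information by solving the linear program \eqref{program:PAV counterexample} for each admissible $T$-type and inspecting the resulting small feasible region.
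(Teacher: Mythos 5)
Your proposal is correct, and for the structural claims (i)--(iii) it takes a genuinely different route from the paper. The shape of $T$ is established the same way in both (the paper notes that for $k=8$ the check behind \eqref{eq:ballot condition} fails only when $|T|=4$ and $|T\cap W|=2$, exactly the outcome you predict for your finite minimisation of $\delta_A$ over blocking triples), but the paper then derives (i)--(iii) purely computationally: it solves a battery of linear programs --- maximising and minimising $P(\{a,b,x\})$, $P(\{a,b,y\})$, $P(A)$ for every other ballot meeting $T$, and $\PAVscore_P(W\setminus\{c\})-\PAVscore_P(W)$ --- subject to the system \eqref{program:PAV counterexample}, and certifies the optimal values $\tfrac14$, $0$, $-\tfrac{1}{12}$ by exact dual (Farkas) certificates. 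You instead extract the same structure by hand from tightness: since $s\ge\tfrac12$ and the blocking minimum of $\delta_A$ for the type $(2,2)$ equals the bound $2$, the whole chain collapses to equality, pinning positive-mass blocking ballots to the unique minimiser $(a,b,c)=(0,2,1)$, positive-mass non-blocking ballots to $b=c=0$, $a\ge 1$, and forcing each individual swap difference $\Delta_{P,z,x}=\Delta_{P,z,y}=0$; evaluating these score-neutral swaps then gives $p=q=\tfrac14$ and the $\tfrac{1}{12}$ drop. Beyond the enumeration already needed for \Cref{thm:pav-k7}, your argument additionally requires uniqueness of the minimisers on both sides, which you rightly flag; both checks are easy (for the $(2,2)$ type the blocking values are $6/(b+1)$, $12/(b+1)$, $10/(b+2)$ with $b\le 2$, minimised only at $(0,2,1)$, and $\delta_A=-2$ on the non-blocking side forces $b=c=0$, $a\ge1$, the degenerate ballots with $a+b=0$ having $\delta_A=0$). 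The trade-off: the paper's route is mechanical and comes with machine-verifiable certificates, whereas yours yields a self-contained, human-readable derivation that needs no further LP solving and even gives the slightly stronger conclusion that the non-blocking half of the voters must intersect $W\setminus T$.
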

\begin{proof}
	We first check that if $W$ is  not in the core, then any core objection must use a $T$ with $|T|=4$ and $|W \cap T| = 2$. This can be deduced using the linear programming approach behind \Cref{thm:pav-k7}; by iterating through all possible $T$, we find that the system \eqref{program:PAV counterexample} has a solution only for $T$ satisfying the condition in the theorem statement. Alternatively, one can check \href{https://github.com/DominikPeters/core-few-candidates/blob/master/A_committee_size_7/check-inequality.ipynb}{[GitHub]} that the inequality \eqref{eq:ballot condition} is only violated for such $T$; thus for other $T$ the proof of \Cref{thm:pav-k7} goes through.
	
	Now fix such a $T = \{a,b,x,y\}$. Assume that there exists a profile $P$ where $W$ is a local PAV committee with successful deviation $T$ but that violates any of the conditions (i)--(iii). Then if we delete all candidates outside $W \cup T$ from the profile, it would still fail (i)--(iii). Thus, for purposes of making the following linear programs finite, we may assume that $C = W \cup T$ (so $|C| = 10$).
	
	To prove (i), we solve the following four linear programs:
	\begin{align*}
		\text{maximize }& P(\{a,b,x\}) \: \text{ subject to $P$ satisfying \eqref{program:PAV counterexample}} \\
		\text{minimize }& P(\{a,b,x\}) \: \text{ subject to $P$ satisfying \eqref{program:PAV counterexample}} \\
		\text{maximize }& P(\{a,b,y\}) \: \text{ subject to $P$ satisfying \eqref{program:PAV counterexample}} \\
		\text{minimize }& P(\{a,b,y\}) \: \text{ subject to $P$ satisfying \eqref{program:PAV counterexample}}
	\end{align*}
	The optimal solutions to all these programs is $\frac14$.
	
	To prove (ii), iterate through all ballots $A \in \A$ with $A \cap T \neq \emptyset$, except for $\{a,b,x\}$ and $\{a,b,y\}$. For each of these ballots, solve the following linear program:
	\begin{align*}
		\text{maximize } & P(A) \: \text{ subject to $P$ satisfying \eqref{program:PAV counterexample}}
	\end{align*}
	For each such $A$, the optimal solution of the program is 0.
	
	To prove (iii), iterate through all $c \in W \setminus \{a,b\}$ and solve the following linear programs:
	\begin{align*}
		\text{maximize }& \PAVscore_P(W \setminus \{c\}) - \PAVscore_P(W) \: \text{ subject to $P$ satisfying \eqref{program:PAV counterexample}} \\
		\text{minimize }& \PAVscore_P(W \setminus \{c\}) - \PAVscore_P(W) \: \text{ subject to $P$ satisfying \eqref{program:PAV counterexample}}
	\end{align*}
	The optimal solutions to these two programs are $-1/12$.
\end{proof}

The claims made in this proof about the optimal values of the various linear programs can be certified by exhibiting solutions to the dual programs. These certificates (using exact fractions, not floating point numbers) are \href{https://github.com/DominikPeters/core-few-candidates/tree/master/B_committee_size_8}{available on GitHub}, together with a script checking their validity without using a solver.

As we discussed, \Cref{ex:pav-k8} shows an example of a global PAV committee that is not core-stable, but there are other global PAV committees for the same profile that are core-stable. Thanks to \Cref{lem:pav-k8-counterexamples}, we deduce that the same holds for \emph{all} counterexamples. Hence, for every instance, at least one global PAV committee is in the core, and thus the core is always non-empty for $k = 8$.

\begin{theorem}
	\label{thm:pav-k8}
	\!When $k = 8$, some global PAV committee is in the core.
\end{theorem}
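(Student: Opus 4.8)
The plan is to start from any global PAV committee $W$ and, if it fails the core, explicitly repair it into a \emph{different} global PAV committee that is core stable, mimicking the passage from $W$ to $W'$ in \Cref{ex:pav-k8}. Concretely: let $W$ be a global PAV committee; if $W$ is in the core we are done, so assume it is not. By \Cref{lem:pav-k8-counterexamples}, the failure has the rigid shape described there: there are $T = \{a,b,x,y\}$ with $a,b \in W$ and $x,y \notin W$, a quarter $G_1$ of the voters with $A \cap (W \cup T) = \{a,b,x\}$, a quarter $G_2$ with $A \cap (W \cup T) = \{a,b,y\}$, the remaining half $G_0$ with $A \cap T = \emptyset$, and, by condition~(iii), every member of $W \setminus \{a,b\}$ contributing exactly $\tfrac1{12}$ to the PAV score. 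Pick any $c \in W \setminus \{a,b\}$ (six choices are available) and set $W' = (W \setminus \{c\}) \cup \{x\}$.

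First I would verify that $W'$ is again a global PAV committee. Removing $c$ decreases the PAV score by exactly $\tfrac1{12}$ by condition~(iii); adding $x$ back increases it by exactly $\tfrac1{12}$, because the only voters approving $x$ are those in $G_1$, each of whom has utility $2$ in $W \setminus \{c\}$ and hence contributes $\tfrac1{2+1}$. Thus $\PAVscore_P(W') = \PAVscore_P(W)$, and since $W$ maximized the PAV score so does $W'$; in particular $W'$ is a local PAV committee, so \Cref{lem:pav-k8-counterexamples} applies to it too. It is worth recording that $G_1$ voters now have utility $3$ in $W'$ while $G_2$ voters still have utility $2$.

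The heart of the proof is to show that $W'$ is core stable. Suppose not; \Cref{lem:pav-k8-counterexamples} then supplies a deviation $T' = \{a',b',x',y'\}$ with $a',b' \in W'$, $x',y' \notin W'$, and the analogous rigid shape: quarter-groups $G_1', G_2'$ whose members have $W'$-utility $2$, a disjoint-from-$T'$ half-group $G_0'$, and every member of $W' \setminus \{a',b'\}$ worth exactly $\tfrac1{12}$. Now split on where $x$ sits in $W'$. If $x \in \{a',b'\}$, every voter in $G_1' \cup G_2'$ approves $x$; but the only voters approving $x$ are those in $G_1$ (a quarter of the electorate), contradicting that $G_1' \cup G_2'$ is half of it. If instead $x \in W' \setminus \{a',b'\}$, then the $G_1$ voters, whose $W'$-utility is $3$, cannot belong to $G_1'$ or $G_2'$, hence they lie in $G_0'$ and are disjoint from $T'$; as they approve $a$ and $b$, this forces $a,b \notin T'$, so $a \in W' \setminus \{a',b'\}$. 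Condition~(iii) for $W'$ then asserts that deleting $a$ costs exactly $\tfrac1{12}$; but the voters approving $a$ are precisely those in $G_1 \cup G_2$ (voters disjoint from $T$ do not approve $a \in T$), with $W'$-utilities $3$ and $2$, so the cost is $\tfrac14\cdot\tfrac13 + \tfrac14\cdot\tfrac12 = \tfrac{5}{24}$ --- a contradiction. In either case we contradict the assumption, so $W'$ lies in the core, completing the proof.

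The step that needs the most care is the final case analysis, specifically the inference ``$G_1$ has $W'$-utility $3$ $\Rightarrow$ $G_1$ disjoint from $T'$ $\Rightarrow$ $a,b \notin T'$'': the closing contradiction only becomes available once we have genuinely \emph{derived} (not assumed) that $a$ is an ordinary member of $W'$ to which condition~(iii) applies. One should also note that no restriction of the candidate set is needed, since the conditions in \Cref{lem:pav-k8-counterexamples} depend only on $A \cap (W \cup T)$ and on PAV-score changes inside $W$, and are therefore unaffected by candidates outside $W \cup T$. Finally, the reason a single repair suffices --- rather than an iterative process like the one used later for bounded $m$ --- is exactly the rigidity of the constant $\tfrac1{12}$ in condition~(iii), which cannot coexist with the utility profile that any second counterexample would impose.
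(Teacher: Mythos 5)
Your proposal is correct and follows essentially the same route as the paper: use \Cref{lem:pav-k8-counterexamples} to pin down the structure, swap $c$ for $x$ to obtain a tied global PAV committee $W'$, and apply the lemma a second time to rule out a deviation from $W'$. The only (valid) difference is the endgame: you contradict condition~(iii) applied to $W'$ by computing that removing $a$ costs $\tfrac14\cdot\tfrac13+\tfrac14\cdot\tfrac12=\tfrac{5}{24}\neq\tfrac1{12}$, whereas the paper contradicts condition~(iii) applied to $W$ by exhibiting a member of $W\setminus\{a,b\}$ approved by no voter, whose removal costs $0$.
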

\begin{proof}
	Let $k = 8$ and let $P$ be a profile. If on $P$, all global PAV committees are in the core, we are done. So suppose that $W$ is a global PAV committee that is not core stable due to objection $T$.
	From \Cref{lem:pav-k8-counterexamples}, there exist distinct $a,b \in W$ and distinct $x, y \in C \setminus W$ such that $T = \{a,b,x,y\}$. Take any $c \in W \setminus \{a, b\}$. Then the committee $W' = W \setminus \{c\} \cup \{x\}$ has the same PAV score as $W$, because the removal of $c$ causes a decrease in PAV score of $1/12$ and the addition of $x$ causes an increase of at least $\frac{1}{4} \cdot \frac{1}{3} = 1/12$ due to the quarter of voters from (i) with ballots $A$ such that $A \cap (W \cup T) = \{a, b, x\}$. Thus, $W'$ is also a global PAV committee. We now show that $W'$ is in the core. 
	
	If not, we can apply \Cref{lem:pav-k8-counterexamples} to $W'$ which gives us an objection $T' = \{a', b', x', y'\}$ to $W'$. Clearly, voters with ballots such that $A \cap (W \cup T) = \{a,b,x\}$ are not part of a blocking coalition because $\{a,b,x\} \subseteq W'$. Thus, we deduce that $a,b \not\in T$ from (ii). Thus, the voters with ballots such that $A \cap (W \cup T) = \{a,b,y\}$ are also not supporters of $T'$. Then from part (i) we deduce that the only members of $W$ that are approved by any voters in $P$ are $a$, $b$, $a'$, and $b'$. Thus, there exists a member of $W \setminus \{a, b\}$ who is not approved by any voter, so the removal of that member does not lead to a reduction in PAV score, contradicting (iii).
\end{proof}

\subsection{Committee size $k \ge 9$}
The PAV-based technique that worked for up to $k = 8$ does not continue to work for $k = 9$, since there are examples where there is a unique global PAV committee which fails to be in the core. The following example has this property, and it is the smallest such example with respect to the number of voters ($n = 27$). The indicated committee is also the unique local PAV committee.

{\setlength{\abovedisplayskip}{0pt}\setlength{\belowdisplayskip}{0pt}
\[
\begin{tikzpicture}
	[yscale=0.4,xscale=0.45,voter/.style={anchor=south, yshift=-7pt}, select/.style={fill=blue!10}, c/.style={anchor=south, yshift=1.5pt, inner sep=0}]
	\draw[select] (0,0) rectangle (12,1);
	\draw[select] (0,1) rectangle (12,2);
	\draw (0,2) rectangle (6,3);
	\draw (6,2) rectangle (12,3);
	
	\draw[select] (12,0) rectangle (27,1);
	\draw[select] (12,1) rectangle (27,2);
	\draw[select] (12,2) rectangle (27,3);
	\draw[select] (12,3) rectangle (27,4);
	\draw[select] (12,4) rectangle (27,5);
	\draw[select] (12,5) rectangle (27,6);
	\draw[select] (12,6) rectangle (27,7);

	\node at (6,0.42) {$c_1$};
	\node at (6,1.42) {$c_2$};
	\node at (3,2.42) {$c_3$};
	\node at (9,2.42) {$c_4$};
	\node at (19.5,0.42) {$c_5$};
	\node at (19.5,1.42) {$c_6$};
	\node at (19.5,2.42) {$c_7$};
	\node at (19.5,3.42) {$c_8$};
	\node at (19.5,4.42) {$c_9$};
	\node at (19.5,5.42) {$c_{10}$};
	\node at (19.5,6.42) {$c_{11}$};
	
	\foreach \i in {1,27}
	\node[voter] at (\i-0.5,-1) {$v_{\i}$};
	\node[voter] at (5,-1) {$v_{6}$};
	\node[voter] at (7,-1) {$v_{7}$};
	\node[voter] at (11,-1) {$v_{12}$};
	\node[voter] at (13,-1) {$v_{13}$};

	\node[voter] at (3,-1) {$\cdots$};
	\node[voter] at (9,-1) {$\cdots$};
	\node[voter] at (19.5,-1) {$\cdots$};
\end{tikzpicture}
\]
}

\noindent
\citet[Example 6]{ejr} gave an example where PAV uniquely selects a non-core-stable committee for $k = 10$ and $n = 20$.

%
%
%

\section{Few Candidates}\label{sec:small-m}

The goal of this section is to show that there always exists a core-stable committee on instances with $m \le 15$ candidates. From the results in \Cref{sec:small-k}, this is clearly true when $k \le 8$. By \Cref{lem:pav-special-core-deviations}(ii), this is also true when $k = m - 1$ or $k = m$. But it is not clear when $k \in \{9, \dots, m-2\}$.

Inspecting the examples in \Cref{sec:small-k} where PAV fails the core, we see that they are well-structured. Indeed, they are even \emph{laminar instances} in the sense of \citet[Definition 2]{PeSk20}, and it is easy to see that on these profiles, a core-stable committee does exist. Thus, there is some hope to prove existence of core-stable committees by ``patching'' the PAV committee when it fails to be in the core.

We will define an artificial rule, based on PAV, and we will show that it satisfies core stability for up to $m = 15$ candidates. We call it the \emph{recursive PAV rule}. On a high level, the rule first computes a local PAV committee, and checks if it satisfies the core. If so, it returns it. If not, and $T$ is a deviation from $W$, it then deletes all voters who prefer $T$ to $W$, and computes a local PAV committee with respect to the remaining voters, but subject to the constraint that $T \subseteq W$. It then checks if the result is in the core; if not, it adds additional constraints until it reaches a core-stable committee. 
This rule is formally described using pseudocode in \Cref{alg:recursive PAV}.

For example, in the profile of \Cref{ex:pav-k8}, PAV selects the committee $W$ indicated there in blue (among other tied committees). This committee is blocked by $T = \{c_1, c_2, c_3, c_4\}$. Thus, the recursive PAV rule would now fix all members of $T$ as winners, and maximize the PAV-score with respect to voters $v_3$ and $v_4$, obtaining the committee $W^* = \{c_1, c_2, c_3, c_4, c_5, c_6, c_7, c_8\}$ (among other tied committees). This committee is in the core, so the rule terminates. Note that $W^*$ is not itself selected by PAV, though as we saw previously, there do exist other (global) PAV committees in the core in \Cref{ex:pav-k8}.

\newcommand{\mathtextover}[3][l]{\mathmakebox[\widthof{\(#3\)}][#1]{#2}}
\begin{algorithm}[t]
	\begin{algorithmic}
		\State \textbf{Input}: A profile $P$ and a committee size $k$
		\State \textbf{Output}: A committee $W$
		\State $\A' \gets \A$, set of active ballots
		\State $\mathtextover[c]{F}{\A'}\gets \emptyset$, set of \emph{fixed} candidates
		\While{true}
			\State If $|F| > k$, the algorithm \textbf{fails}
			\State $W \gets$ any committee locally maximizing the PAV score 
			\Statex \phantom{$W \gets$} w.r.t.\ the ballots in $\A'$ and subject to $F \subseteq W$
			\If{there exists a successful deviation $T$ from $W$}
				\State $\mathtextover[c]{F}{\A'} \gets F \cup T$
				\State $\A' \gets \A' \setminus \{ A \in \A : u_A(T) > u_A(W) \}$
			\Else
				\State \Return $W$
			\EndIf
		\EndWhile
	\end{algorithmic}
	\caption{Recursive PAV rule}
	\label{alg:recursive PAV}
\end{algorithm}

This method is reminiscent of the Greedy Cohesive Rule \citep{peters2021proportional}, which similarly repeatedly patches a committee until it satisfies the representation axiom FJR.

\subsection{Analysis of the Method}

Fix a number of candidates $m$ and a committee size $k$. 

A list $(W_1, T_1), (W_2, T_2), \dots, (W_r, T_r)$ is called a \emph{potential history} if for each $t \in [r]$, we have that $W_t$ is a committee, $T_t$ is a potential deviation, and $T_1 \cup \dots \cup T_{t-1} \subseteq W_t$.

\begin{definition}
	\label{def:history}
	A potential history $(W_1, T_1), \dots, (W_r, T_r)$ is a \emph{history} if there exists a profile $P$ such that for each $t \in [r]$ we have that $T_t$ is a \emph{successful} deviation from $W_t$, and that for all 
	$x \in W \setminus (T_1 \cup \dots \cup T_{t-1})$ and $y \in C \setminus W$, we have
	{\setlength{\belowdisplayskip}{3pt}
	\[
		 \sum_{A \in \A_t} P(A) \cdot \PAVscore_A(W_t)
		 \ge
		 \sum_{A \in \A_t} P(A) \cdot \PAVscore_A(W_{xy})
	\]}
	where $\A_t = \{ A \in \A : u_A(T_s) \le u_A(W_s) \text{ for } s \in \{1, \dots, t-1\} \}$ is the set of ``active'' ballots. That is, $W_t$ locally maximizes the PAV score among all committees that include all prior deviations, taking only those voters into account that did not participate in prior deviations.
\end{definition}

Thus, a history provides a trace of the execution of \Cref{alg:recursive PAV} for some profile. The following result states that it is enough to analyze the set of histories to determine if \Cref{alg:recursive PAV} always terminates with a core-stable committee.

\begin{proposition}
	\label{prop:limited history size}
	Suppose that for every history $(W_1, T_1), \dots, (W_r, T_r)$, we have $|T_1| + \dots + |T_r| \le k$. Then a core-stable committee always exists for $m$ and $k$.
\end{proposition}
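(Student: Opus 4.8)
The plan is to show the contrapositive-flavoured statement: if Algorithm~\ref{alg:recursive PAV} does not always terminate with a core-stable committee for $m$ and $k$, then there is a history whose deviations have total size exceeding $k$. So suppose some profile $P$ witnesses bad behaviour. First I would argue that every execution of the \textbf{while} loop on $P$ produces a sequence $(W_1,T_1),(W_2,T_2),\dots$ that is a potential history: the committee $W_t$ is always chosen to satisfy $F\subseteq W_t$ where $F = T_1\cup\dots\cup T_{t-1}$ at the start of iteration $t$, which is exactly the defining inclusion $T_1\cup\dots\cup T_{t-1}\subseteq W_t$. Next, since at iteration $t$ we only enter the ``$F\gets F\cup T$'' branch when a \emph{successful} deviation $T_t$ from $W_t$ exists, and since $W_t$ is chosen to locally maximise the PAV score with respect to the active ballots $\A_t$ and subject to $F\subseteq W_t$, every finite prefix $(W_1,T_1),\dots,(W_t,T_t)$ satisfies the conditions of Definition~\ref{def:history} with the \emph{same} profile $P$, hence is a history.

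The key step is then to observe that the algorithm on $P$ must, at some iteration $r$, either (a) fail because $|F|>k$, or (b) loop forever. In case (a), just before failing we have appended $T_1,\dots,T_r$ with $|T_1\cup\dots\cup T_r|>k$; since $|T_1|+\dots+|T_r|\ge|T_1\cup\dots\cup T_r|$, the history $(W_1,T_1),\dots,(W_r,T_r)$ violates the hypothesis $|T_1|+\dots+|T_r|\le k$. In case (b), the set $F$ is non-decreasing across iterations and is a subset of the $m$-element candidate set, so it can strictly increase only finitely often; hence from some iteration on, $F$ is constant. But at each iteration we \emph{add} the whole deviation $T_t$ to $F$, so if $F$ does not change then $T_t\subseteq F$, i.e. $T_t\subseteq W_t$ (using $F\subseteq W_t$). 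By Lemma~\ref{lem:pav-special-core-deviations}(ii) (indeed already the $T\setminus W=\emptyset$ case, and $W_t$ is a local PAV committee with respect to $\A_t$, which is all that lemma needs), such a $T_t$ cannot be a successful deviation from $W_t$ — contradicting that we entered the patching branch. So case (b) is impossible, and the algorithm always halts, either returning a committee or failing.

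Finally I would tie the two ends together. If for \emph{every} history the total deviation size is at most $k$, then case (a) never occurs (the failure at iteration $r$ would have produced a history of total size $>k$ as above), so on every profile $P$ the algorithm halts via the \textbf{else} branch, returning some $W$ for which no successful deviation exists — i.e.\ a core-stable committee. Hence the core is non-empty for $m$ and $k$, as claimed. The main obstacle is the termination argument in case (b): one has to be careful that $W_t$ is a \emph{local} PAV committee only with respect to the restricted ballot set $\A_t$ and subject to the constraint $F\subseteq W_t$, and check that Lemma~\ref{lem:pav-special-core-deviations}(ii) still applies in that restricted setting — which it does, because the lemma's proof of part~(ii) for $T\subseteq W$ is trivial and makes no use of which ballots or side-constraints are present. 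A minor subtlety to spell out is that the same profile $P$ works for every prefix, so the prefixes are genuinely histories in the sense of Definition~\ref{def:history}, rather than merely potential histories.
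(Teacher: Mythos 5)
Your proposal is correct and follows essentially the same approach as the paper: run the recursive PAV algorithm on an arbitrary profile, observe that the trace of committees and successful deviations is a history for that same profile, use the hypothesis to rule out the failure branch, and argue termination, so the algorithm must exit through the else-branch with a core-stable committee. The only divergence is the termination step, which the paper derives directly from the hypothesis (each $|T_t| \ge 1$ together with $|T_1| + \dots + |T_r| \le k$ forces $r \le k$, hence at most $k+1$ iterations), whereas you obtain it independently of the hypothesis via monotonicity of $F$ inside the finite candidate set plus the trivial fact that a set $T \subseteq W$ can never be a successful deviation --- a valid, slightly more self-contained variant of the same argument.
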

\begin{proof}
	Let $P$ be a profile, and run \Cref{alg:recursive PAV} on it. By the assumption, in each iteration, $|F| \le |T_1| + \dots + |T_r| \le k$, so the algorithm does not fail. By the if-clause, if the algorithm terminates, it returns a committee that is core-stable. Thus, it suffices to show that the algorithm terminates.
	
	Note that after each iteration of the algorithm, it either terminates or it has found a successful deviation. Suppose iteration $r$ has ended without the algorithm terminating. The sequence of committees and deviations $(W_1, T_1), \dots, (W_r, T_r)$ identified by the algorithm up to iteration $r$ forms a history. Since $|T_t| \ge 1$ for all $t$, it follows from $|T_1| + \dots + |T_r| \le k$ that $r \le k$. So it must terminate after at most $k+1$ iterations.
\end{proof}

\begin{algorithm}[t]
	\begin{algorithmic}
		\State \textbf{Input}: Number $m$ of candidates and a committee size $k$
		\State \textbf{Output}: A collection of all histories and Farkas certificates
		\State $\mathcal{H}_0 \gets \{\emptyset\}$, the empty history
		\For{$t = 1, 2, \dots$}
			\ForAll{$H \in \mathcal{H}_{t-1}$}
				\ForAll{potential continuations $(W_t, T_t)$}
					\If{$(W_t, T_t)$ is not canonical with respect to $H$}
						\State \textbf{continue}
					\EndIf
					\State Set $H' \gets H + (W_t, T_t)$
					\State Solve LP to check if $H'$ is a history
					\State If yes, add $H'$ to $\mathcal{H}_t$
					\State If no, generate a Farkas certificate
				\EndFor
			\EndFor
			\If{$\mathcal{H}_t = \emptyset$}
				\State{\textbf{break}}
			\EndIf
		\EndFor
	\end{algorithmic}
	\caption{Finding all histories}
	\label{alg:find all histories}
\end{algorithm}

Thus, to prove the existence of core-stable committees, it suffices to enumerate all histories and check that they fulfil the condition of \Cref{prop:limited history size}.
Given a potential history, one can check using an LP solver whether it is a history by checking whether the system of linear inequalities in \Cref{def:history} has a solution.
This way, we can compute the set of histories using a standard breadth-first search, as shown in \Cref{alg:find all histories}.
A key insight to speed up the search is that we may break symmetries and only need to consider ``canonical'' histories in our enumeration. For example, we may assume without loss of generality that $W_1$, the first committee of the history, is $\{c_1, \dots, c_k\}$. Similarly, we do not need to consider all potential deviations $T_1$: given our choice of $W_1$, the candidates in $W_1$ are indistinguishable to each other, as are the candidates in $C \setminus W_1$, and thus it suffices to take one deviation for each possible combination of the sizes of $|T_1 \cap W_1|$ and of $|T_1 \cap  (C \setminus W_1)|$. Similar symmetry-breaking conditions apply for later steps.%
\footnote{Abstractly speaking, if an algorithm has identified sets $A_1, \dots, A_s \subseteq C$ thus far (where in the present application, these sets are $W_t$'s and $T_t$'s), we can define an equivalence relation with $a \sim b$ if and only if for each $t \in [s]$, either $a,b \in A_t$ or $a,b \in C \setminus A_t$. Then two possibilities $A_{s+1}$ and $A_{s+1}'$ for the next set in the sequence are indistinguishable if for every equivalence class of $\sim$, the two possibilities contain the same number of elements from that equivalence class. A \emph{canonical} choice for the next set would choose the lexicographically first elements from each equivalence class.}

For example, for $m = 15$ and $k = 13$, \Cref{alg:find all histories} produces the following set of (canonical) histories, where we write $W_1 = \{c_{1}, \dots, c_{13}\}$ and $W_2 = \{c_{1}, \dots, c_{11}, c_{14}, c_{15}\}$.
{\addtolength{\jot}{-0.6pt}\begin{align*}
		& \emptyset \text{, the empty history} \\[-1.3pt]
		& (W_1, \{c_{1}, c_{2}, c_{3}, c_{4}, c_{5}, c_{6}, c_{7}, c_{8}, c_{14}, c_{15}\}) \\
		& (W_1, \{c_{1}, c_{14}, c_{15}\}) \\
		& (W_1, \{c_{1}, c_{2}, c_{3}, c_{14}, c_{15}\}) \\
		& (W_1, \{c_{1}, c_{2}, c_{14}, c_{15}\}) \\
		& (W_1, \{c_{1}, c_{2}, c_{3}, c_{4}, c_{5}, c_{14}, c_{15}\}) \\
		& (W_1, \{c_{1}, c_{2}, c_{3}, c_{4}, c_{14}, c_{15}\}) \\
		& (W_1, \{c_{1}, c_{2}, c_{3}, c_{4}, c_{5}, c_{6}, c_{7}, c_{14}, c_{15}\}) \\
		& (W_1, \{c_{1}, c_{2}, c_{3}, c_{4}, c_{5}, c_{6}, c_{14}, c_{15}\}) \\
		& \mathmakebox[4.5cm][l]{(W_1, \{c_{1}, c_{14}, c_{15}\}),}               (W_2, \{c_{2}, c_{12}, c_{13}\}) \\
		& \mathmakebox[4.5cm][l]{(W_1, \{c_{1}, c_{14}, c_{15}\}),}               (W_2, \{c_{2}, c_{3}, c_{12}, c_{13}\}) \\
		& \mathmakebox[4.5cm][l]{(W_1, \{c_{1}, c_{2}, c_{3}, c_{14}, c_{15}\}),} (W_2, \{c_{4}, c_{5}, c_{12}, c_{13}\}) \\
		& \mathmakebox[4.5cm][l]{(W_1, \{c_{1}, c_{2}, c_{14}, c_{15}\}),}        (W_2, \{c_{3}, c_{12}, c_{13}\}) \\
		& \mathmakebox[4.5cm][l]{(W_1, \{c_{1}, c_{2}, c_{14}, c_{15}\}),}        (W_2, \{c_{3}, c_{4}, c_{12}, c_{13}\}) \\
		& \mathmakebox[4.5cm][l]{(W_1, \{c_{1}, c_{2}, c_{14}, c_{15}\}),}        (W_2, \{c_{3}, c_{4}, c_{5}, c_{12}, c_{13}\})
\end{align*}}

By running \Cref{alg:find all histories} for $m = 15$ and $k = 9, \dots, 13$, we obtain the following result. (Note that existence for $m = 15$ implies existence for all $m \le 15$.)

\begin{theorem}
	\label{thm:m-15}
	If $m \le 15$, a core-stable committee exists.
\end{theorem}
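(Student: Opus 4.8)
The strategy is dictated by \Cref{prop:limited history size}: a core-stable committee exists for given $m$ and $k$ as soon as every history $(W_1,T_1),\dots,(W_r,T_r)$ satisfies $|T_1|+\dots+|T_r|\le k$. For $m \le 15$ the cases $k \le 8$ and $k \in \{m-1,m\}$ are already settled (by \Cref{thm:pav-k7}, \Cref{thm:pav-k8}, and \Cref{lem:pav-special-core-deviations}(ii)), so only $k \in \{9,\dots,m-2\}$ needs attention. Moreover it suffices to treat $m=15$: given an instance with $m' \le 15$ candidates and $k \le m'$ seats, pad the candidate set with $15-m'$ extra candidates approved by no voter, take a core-stable committee $W$ of the padded instance, and replace every padding candidate in $W$ by a genuine candidate outside $W$ (there are enough, since $m'\ge k$); as padding candidates contribute $0$ to every voter's utility, this replacement only weakly increases all utilities and hence cannot create a new deviation, so the result is core-stable for the original instance. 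Thus the theorem reduces to verifying the hypothesis of \Cref{prop:limited history size} for $m=15$ and $k\in\{9,10,11,12,13\}$.

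To do that I would run the breadth-first enumeration of \Cref{alg:find all histories}. Level by level, it extends each history $H$ found so far by every \emph{canonical} potential continuation $(W_t,T_t)$; canonicity exploits the fact that candidates lying in the same atom of the partition of $C$ generated by the sets already named in $H$ are interchangeable, so only one representative continuation per tuple of intersection sizes needs to be tried rather than all combinatorially many sets. Each candidate extension $H'$ is tested for being a genuine history by checking feasibility of the linear system of \Cref{def:history} --- the normalization $\sum_A P(A)=1$, the local-PAV-optimality inequalities for the currently active ballots, the strict-deviation inequalities for each $T_s$, and $P\ge 0$ --- which becomes a finite LP once we restrict to the candidate set $W_1\cup T_1\cup\dots\cup W_r\cup T_r$ (any counterexample on a larger candidate set restricts to one on these, since deleting unelected candidates preserves local PAV optimality). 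Feasible extensions are retained; infeasible ones are discarded together with a Farkas certificate. The search stops at the first empty level, which happens for $m=15$ (but not, as remarked, for $m=16$).

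Carrying this out for $m=15$ and each $k\in\{9,\dots,13\}$ produces a finite list of histories --- the list for $k=13$ is the one displayed in the text --- and one checks directly that each entry has total deviation size at most $k$ (indeed, far below it). By \Cref{prop:limited history size} this yields a core-stable committee for $m=15$, and the padding argument above extends the conclusion to all $m \le 15$.

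The chief obstacle is not in the argument but in trusting the computation, so the point is to make it independently checkable. This is the role of the Farkas certificates: every extension the algorithm rejects comes with an explicit non-negative integer combination of its defining inequalities certifying infeasibility, which can be verified in exact rational arithmetic without any solver. The one genuinely mathematical subtlety is justifying the restriction to canonical histories --- namely, that if a non-canonical potential history is realized by some profile, then the corresponding canonical relabeling is realized by the relabeled profile --- and this holds because permuting candidates within atoms of the relevant partition is a symmetry of every constraint appearing in \Cref{def:history}.
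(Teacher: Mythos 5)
Your proposal is correct and follows essentially the same route as the paper: reduce to $m=15$ and $k\in\{9,\dots,13\}$, enumerate all canonical histories by the breadth-first search of \Cref{alg:find all histories} with LP feasibility tests for \Cref{def:history} certified by Farkas witnesses, check that every history has total deviation size at most $k$, and conclude via \Cref{prop:limited history size}. The only cosmetic difference is that the paper runs the LPs over all ballots on the full $15$-candidate set (which is already finite) rather than restricting to $W_1\cup T_1\cup\cdots$, but this changes nothing essential.
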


The computations establishing \Cref{thm:m-15} can be verified based on Farkas certificates: the \href{https://github.com/DominikPeters/core-few-candidates/tree/master/C_recursive_PAV_rule}{code repository} includes, for each history and each possible extension of the history that induces an infeasible system of linear inequalities, a Farkas witness. Each witness is a list of about $t \cdot k \cdot (m-k)$ integers, where $t$ is the length of the history, corresponding to the constraints in \Cref{def:history}, and verifying the correctness of the witness requires checking about $2^m$ inequalities. In total, there are 114\,373 witnesses (taking 125 MB) and verifying their validity using a simple script \href{https://github.com/DominikPeters/core-few-candidates/tree/master/C_recursive_PAV_rule}{[GitHub]} performing exact fractional computations (without calling a solver) takes about 4 hours on 8 cores (see \Cref{tbl:stats}).

\begin{table}[t]
	\centering
	\begin{tabular}{lrrrrr}
		\toprule
		\hspace{3.5cm} $k = $ & 9 & 10 & 11 & 12 & 13 \\
		\midrule
		number of canonical histories & 7 & 11 & 15 & 20 & 15 \\
		number of Farkas witnesses & 20\,476 & 25\,313 & 18\,567 & 43\,140 & 6\,877  \\
		time for checking Farkas (s) & 2\,648 & 3\,301 & 2\,087 & 5\,857 & 725 \\
		\bottomrule
	\end{tabular}
	\caption{Statistics about the histories for $m = 15$.}
	\label{tbl:stats}
\end{table}


The recursive PAV rule fails for $m = 16$, $k \in \{10, 11\}$. For $m = 16$, $k = 10$, the smallest failure example I have found has 40\,448\,550 voters, though the ILP for minimizing this number did not converge within a reasonable amount of time. The example is \href{https://github.com/DominikPeters/core-few-candidates/blob/master/C_recursive_PAV_rule/counterexample-m16-k10.json}{available online}.%
\footnote{The history that witnesses this failure is the following: $(\{c_{0}, c_{1}, c_{2}, c_{3}, c_{4}, c_{5}, c_{6}, c_{7}, c_{8}, c_{9}\}, \{c_{0}, c_{10}, c_{11}\}),$ 
$(\{c_{0}, c_{1}, c_{2}, c_{3}, c_{4}, c_{5}, c_{6}, c_{10}, c_{11}, c_{12}\}, \{c_{13}, c_{14}, c_{15}\}),$ 
$(\{c_{0}, c_{1}, c_{2}, c_{3}, c_{10}, c_{11}, c_{12}, c_{13}, c_{14}, c_{15}\}, \{c_{4}, c_{5}, c_{6}, c_{7}, c_{8}\})$.}
The recursive PAV rule \emph{does} work for $m = 16$, $k \in \{9, 12, 13, 14\}$, and it is plausible that it can be fixed \textit{ad~hoc} for $k \in \{10, 11\}$, so it is likely that the core continues to exist for $m = 16$.

\section{Droop Quota}\label{sec:droop}
Our definition of core stability is based on the intuition that a $1/k$ fraction of the voters is ``entitled'' to decide on one of the committee members, and that an $\ell/k$ fraction is entitled to decide on $\ell$ committee members. The quantity $1/k$ is known as the \emph{Hare quota}. But one can also define core stability based on the \emph{Droop quota}, according to which each group of voters that makes up a strictly larger fraction than $1/(k+1)$ is entitled to decide on one committee member. Thus, a committee $W$ is \emph{Droop core stable} if for every potential deviation $T$, we have 
\[
\sum_{A \in \A : u_A(T) > u_A(W)} P(A) \le \frac{|T|}{k+1}.
\]
This is a stricter condition than the normal core, so if $W$ is Droop core stable then it is also core stable.

For most proportionality notions considered in the literature on approval-based committee elections, passing to the more demanding Droop quota does not cause many issues. For example, PAV still satisfies EJR when defined with the Droop quota, and analogous statements are true for many pairs of voting rules and representation axioms \citep{janson2018thresholds}.%
\footnote{However, regarding strategic aspects, impossibility theorems become somewhat more expansive when passing to the Droop quota \citep[Section 5.3]{peters2018proportionality}.}

Unfortunately, our positive results do not extend to the Droop core. While PAV satisfies the core for up to $k = 8$ (\Cref{sec:small-k}), it violates the Droop core already for $k = 6$.

\begin{example}[PAV may fail the Droop core for $k = 6$]
	\label{ex:pav-droop-k6}
	Consider the instance depicted below:
	\[
	\begin{tikzpicture}
		[yscale=0.4,xscale=0.4,voter/.style={anchor=south, yshift=-7pt}, select/.style={fill=blue!10}, c/.style={anchor=south, yshift=1.5pt, inner sep=0}]
		\draw[select] (0,0) rectangle (14,1);
		\draw[select] (0,1) rectangle (14,2);
		\draw (0,2) rectangle (7,3);
		\draw (7,2) rectangle (14,3);
		
		\draw[select] (14,0) rectangle (24,1);
		\draw[select] (14,1) rectangle (24,2);
		\draw[select] (14,2) rectangle (24,3);
		\draw[select] (14,3) rectangle (24,4);

		\node at (7,0.42) {$c_1$};
		\node at (7,1.42) {$c_2$};
		\node at (3.5,2.42) {$c_3$};
		\node at (10.5,2.42) {$c_4$};
		\node at (19,0.42) {$c_5$};
		\node at (19,1.42) {$c_6$};
		\node at (19,2.42) {$c_7$};
		\node at (19,3.42) {$c_8$};
		
		\foreach \i in {1,24}
			\node[voter] at (\i-0.5,-1) {$v_{\i}$};
		\node[voter] at (6,-1) {$v_{7}$};
		\node[voter] at (8,-1) {$v_{8}$};
		\node[voter] at (13,-1) {$v_{14}$};
		\node[voter] at (15,-1) {$v_{15}$};
	
	\node[voter] at (3.5,-1) {$\cdots$};
	\node[voter] at (10.5,-1) {$\cdots$};
	\node[voter] at (19,-1) {$\cdots$};
	\end{tikzpicture}
	\]
	On this profile, $W = \{c_1, c_2, c_5, c_6, c_7, c_8 \}$ is the unique global (and unique local) PAV committee for $k = 6$. However $W$ is not in the Droop core: consider $T = \{c_1, c_2, c_3, c_4\}$, which has support from $\frac{14}{24} \approx 0.583$ of the voters, while $|T|/(k+1) = \frac{4}{7} \approx 0.571$ is strictly smaller.
	\qed
\end{example}
This example is minimal, so the Droop core is non-empty when $k \le 5$.
Running the recursive PAV rule (\Cref{alg:recursive PAV}) with the Droop quota stops working even for $m = 10$, $k = 6$.

\section{Conclusions}\label{sec:conclusions}

Based on the computations of this paper, we know that the core is non-empty for all small instances. This should probably strengthen our belief that the core is always non-empty. However, the recursive PAV method we defined to establish the result stops working for $16$ or more candidates, so it seems doubtful that analyzing this method would allow proving a general existence result. Conversely, finding a counterexample to core existence will also be challenging since it will need to be large. For the Droop quota, however, it even remains unknown whether core always exists for $k = 6$ and $m = 10$.

Our approach was based on linear programming, and in particular this approach allowed us to reason independently of the number of voters. The PAV rule and its variants are particularly well-suited for these LP formulations. However, finding core counterexamples for many other rules is not possible using similar linear programs. For example, the Method of Equal Shares (MES) \citep{PeSk20} or the sequential Phragm\'en method \citep{Phra94a,Jans16a} do not admit the same kind of linear formulations (because they would require multiplying variables corresponding to ballot frequencies with variables corresponding to $\rho$-values or to loads). The lack of such a linear formulation can be formally established using the techniques of \citet{xia2025lineartheory}. In part because computer search is difficult for these rules, to the best of my knowledge, there is no known profile where both PAV and MES fail core-stability simultaneously. I am also not aware of any example where the rule that maximizes the PAV score among all \emph{priceable} committees \citep{PeSk20} fails core-stability.

\citet{maly2023coreapprovalbasedpbinstance} presents an example in the participatory budgeting setting with cost utilities where the core is empty. That example uses only 3 voters. It would be interesting to see if computer-aided methods could establish that for committee elections, the core is always non-empty for $n = 3$ voters. Note that in this case, candidates can be specified via the set of voters that approve the candidate, so there are only $2^3$ different types of candidates, and thus a profile can be specified via variables that indicate how many candidates of each type exist.

\section*{Acknowledgements}
I thank Paul Gölz for useful discussions, and Jannik Peters and the anonymous reviewers at IJCAI 2025 for feedback that improved the presentation of the paper. I have used Gurobi and the cvc5 solver \citep{cvc5} in this work. This work was funded in part by the Agence Nationale de la Recherche under grant ANR22-CE26-0019 (CITIZENS) and as part of the France 2030 program under grant ANR-23-IACL-0008 (PR[AI]RIE-PSAI).


\end{document}